\newtheorem{theorem}{Theorem}
\newtheorem{rem}{Remarks}
\begin{document}
	
\begin{frontmatter}

\title{Introduction of the \texorpdfstring{G$_2$}{G2}-Ricci Flow: Geometric Implications for Spontaneous Symmetry Breaking and Gauge Boson Masses}

\author[1]{Richard Pinčák}
\address[1]{Institute of Experimental Physics, Slovak Academy of Sciences, Watsonova 47, 043 53 Košice, Slovak Republic}
\ead{pincak@saske.sk}

\author[2]{Alexander Pigazzini}
\address[2]{Mathematical and Physical Science Foundation, 4200 Slagelse, Denmark}
\ead{pigazzinialexander18@gmail.com}

\author[1]{Michal Pudlák}
\ead{pudlak@saske.sk}

\author[4]{Erik Bartoš}%
\address[4]{Institute of Physics, Slovak Academy of Sciences, Dúbravská cesta 9, 845 11 Bratislava, Slovak Republic}%
\ead{erik.bartos@savba.sk}

\begin{abstract}
This work introduces the G$_{2}$-Ricci flow on seven-dimensional manifolds
with non-zero torsion and explores its physical implications. By extending
the Ricci flow to manifolds with G$_{2}$ structures, we study the evolution
of solitonic solutions and their role in spontaneous symmetry breaking
in gauge theories. In particular, this model proposes that the masses of
the W and Z bosons are determined not by an external scalar field, as in
the Higgs mechanism, but by the intrinsic geometric torsion of the manifold.
Furthermore, a possible connection between the geometry of extra dimensions
and the curvature of our spacetime is explored, with implications for the
experimentally observed positive cosmological constant. This approach provides
an innovative interpretation of fundamental interactions in theoretical
physics, opening new possibilities for studying extra dimensions and the
geometry of G$_{2}$-manifolds.%
\end{abstract}
%
%
%
\begin{keyword}
{Symmetry breaking} \sep {Gauge boson}
\end{keyword}
%
%
\end{frontmatter}
%
\section{Introduction}
\label{sec1}

The Ricci flow has been a powerful tool in differential geometry, most
notably used in the proof of the Poincar\'{e} conjecture by Perelman
\cite{Perelman:2002}. The Ricci flow is a geometric evolution equation
that smooths out the metric of a manifold, guiding it toward its canonical
geometric structure. Originally introduced by Hamilton
\cite{Hamilton:1982}, the Ricci flow has since found applications in various
areas of mathematics and physics, including high-energy theoretical physics
and geometry.

Extending the concept of Ricci flow to manifolds with special holonomy
groups, the G$_{2}$-Ricci flow naturally arises in the study of 7-dimensional
manifolds endowed with G$_{2}$-structures. These structures are not only
mathematically interesting due to their unique holonomy properties, but
also important in theoretical physics, particularly in string theory, where
G$_{2}$-manifolds play a central role in compactifications
\cite{Hitchin:2001}. Witten \cite{Witten:1996}, highlighted the importance
of G$_{2}$-structures in string theory compactifications, showing how they
preserve a portion of supersymmetry, making them critical in models of
particle physics that emerge from string theory.

A G$_{2}$-structure on a 7-dimensional manifold is characterized by a 3-form
$\varphi $, which reduces the structure group to the exceptional Lie group
G$_{2}$. When $\varphi $ is both closed and co-closed, the structure is
torsion-free, and the associated metric is Ricci-flat
\cite{Joyce:2000}. However, in many physical contexts, such as sigma models
and gauge theories, it is natural to consider G$_{2}$-structures with non-zero
torsion, which complicates the evolution of these structures under the
G$_{2}$-Ricci flow \cite{Fernandez:1982}. Bryant \cite{Bryant:1987}, contributed
significantly by providing examples of G$_{2}$-structures with torsion
and exploring their geometric properties, which has been influential in
understanding how torsion affects the evolution under the G$_{2}$-Ricci
flow.

Previously, the discrete gravitational field and spacetime with canonical
defect was considered \cite{Pincak:2021}. In the Pigazzini-Pin\v{c}ak braneworld
scenario, there exists a class of manifolds in the context of D-branes
using partially negative-dimensional product, so-called PNDP manifolds
(PNDP), which have virtual dimensions. The topological defects can emerge
from non-orientability and they can be related to the appearance of curvature
at low-energy scales.

In this paper, we investigate the G$_{2}$-Ricci flow on 7-dimensional manifolds
with G$_{2}$-structures, focusing particularly on solutions involving torsion.
We rigorously analyze the conditions for existence, regularity, and convergence
of solutions. Additionally, we explore how the solitonic solutions derived
from the G$_{2}$-Ricci flow influence spontaneous symmetry breaking in
gauge theories. These solitons induce symmetry breaking, leading to the
generation of massive gauge bosons, introducing a novel geometric mechanism
that competes with the Higgs mechanism.

One of the fundamental problems in theoretical physics is to understand
the origin of the masses of gauge bosons and the role of geometry in the
fundamental interactions. In the Standard Model, the mass of the $W$ and
$Z$ bosons arises through the Higgs mechanism, but this requires the ad
hoc introduction of a scalar potential. Our approach proposes an alternative
based on the geometric torsion of G$_{2}$-manifolds, which could offer
a more natural explanation of the spontaneous electroweak symmetry breaking.
This raises several fundamental questions: Is it possible to derive symmetry
breaking directly from geometry? What are the properties of solitonic solutions
of the G$_{2}$ Ricci flow? What are the physical consequences of a purely
geometric mechanism?

This provides a deeper understanding of the interplay between geometry
and high-energy physics \cite{Karigiannis:2008}.

\section{Preliminaries}
\label{sec2}

\subsection{\texorpdfstring{G$_{2}$}{G2}-structures on 7-dimensional manifolds}
\label{sec2.1}

A 7-dimensional differentiable manifold $M$ is said to possess a G$_{2}$-structure
if there exists a 3-form $\varphi $ that reduces the structure group of
$M$ to G$_{2}$, a subgroup of SO(7). This form $\varphi $ not only defines
a Riemannian metric $g$ but also determines a corresponding 4-form
$\ast \varphi $ via relations that characterize the G$_{2}$ group. Specifically,
in a local orthonormal basis $\{e^{1}, \dots , e^{7}\}$ of the cotangent
bundle, the 3-form $\varphi $ is expressed as
%
\begin{equation}
\varphi = e^{123} + e^{145} + e^{167} + e^{246} - e^{257} - e^{347} - e^{356},
\label{eq1}
\end{equation}
where $e^{ijk} = e^{i} \wedge e^{j} \wedge e^{k}$. This 3-form
$\varphi $ directly determines a Riemannian metric $g_{\varphi}$ on
$M$ through a relation between $\varphi $ and the metric coefficients.

A G$_{2}$-structure is torsion-free if $\varphi $ is both closed and co-closed,
satisfying the conditions
%
\begin{equation}
d\varphi = 0 \quad \text{and} \quad d\ast \varphi = 0,
\label{eq2}
\end{equation}
where $\ast \varphi $ is the Hodge dual of $\varphi $. When these conditions
are met, the manifold $M_{7}$ has a torsion-free G$_{2}$-structure, implying
that the connection associated with the metric is compatible with the G$_{2}$-structure
and the manifold admits a parallel G$_{2}$-form.

\subsection{Function spaces}
\label{sec2.2}

To study the G$_{2}$-Ricci flow, we utilize Sobolev spaces
$H^{k,\alpha}(M)$, where $k$ denotes the degree of differentiability and
$\alpha $ represents the H\"{o}lder exponent. For initial conditions, the
3-form $\varphi $ is assumed to belong to $H^{k,\alpha}(M)$ for
$k \geq 2$, ensuring sufficient regularity to apply elliptic theory.

\bigskip\noindent\textit{G$_{2}$-manifolds and Sobolev spaces}

A G$_{2}$-structure is said to belong to the Sobolev space
$H^{k,\alpha}(M)$ if the components of $\varphi $ and the associated metric
$g_{\varphi}$ exhibit appropriate regularity, meaning that derivatives
up to order $k$ are H\"{o}lder continuous with exponent $\alpha $. This
ensures that the manifold is compatible with the analysis required for
the G$_{2}$-Ricci flow.

\subsection{Definition of the \texorpdfstring{G$_{2}$}{G2}-Ricci flow}
\label{sec2.3}

The G$_{2}$-Ricci flow is defined by the following equation
%
\begin{equation}
\frac{\partial \varphi}{\partial t} = \Delta _{d} \varphi +
\mathcal{L}_{X} \varphi + \mathrm{Ric} \lrcorner \ast \varphi + T(\varphi ),
\label{eq3}
\end{equation}
where
\begin{itemize}
\item $\Delta _{d}$ is the Hodge-de Rham Laplacian, a second-order elliptic
operator that acting on the 3-form $\varphi $.
\item $\mathcal{L}_{X} \varphi $ is the Lie derivative of $\varphi $ along
a vector field $X$. It is first-order operator.
\item $(\mathrm{Ric} \lrcorner \ast \varphi) $ is the contraction of the
Ricci tensor with the 4-form $\ast \varphi $.
\item $T(\varphi )$ represents the torsion of the G$_{2}$-structure, which measures the deviations from the torsion-free condition. It is decomposed according to the torsion classes  \(\tau_0, \tau_1, \tau_2, \tau_3\), (see Section 6.2, equation \ref{eq45}).
\end{itemize}

\subsection{Contraction of the Ricci tensor with \texorpdfstring{$\ast \varphi $}{*varphi}}
\label{sec2.4}

The contraction $(\mathrm{Ric} \lrcorner \ast \varphi) $ is given by
%
\begin{equation}
(\mathrm{Ric} \lrcorner \ast \varphi)_{ijk} = R_{im} (\ast \varphi)^{m}{}_{jk},
\label{eq4}
\end{equation}
where $\ast \varphi $ is the Hodge dual of $\varphi $, and the brackets
denote cyclic summation over indices. This term accounts for how the Ricci
curvature of the manifold interacts with the evolving G$_{2}$-structure
during the flow.

The contraction $R^{m}_{[i} (\ast \varphi )_{|m|jk]}$ ensures that the
resulting tensor is a well-defined 3-form. The operation removes one index
via contraction with the Ricci tensor $R^{m}_{i}$, while the symmetrization
maintains compatibility with the antisymmetry of $\ast \varphi $. This
guarantees that $(\mathrm{Ric} \lrcorner \ast \varphi) $ is consistent with
the structure of differential forms in G$_{2}$-geometry.

\section{Problem formulation}
\label{sec3}

We consider a 7-dimensional compact manifold $M$ with an initial G$_{2}$-structure
$\varphi (0) \in H^{k,\alpha}(M)$, with $k \geq 2$. Our goal is to study
the evolution of the G$_{2}$-structure under the G$_{2}$-Ricci flow, described
by the equation
%
\begin{equation}
\frac{\partial \varphi}{\partial t} = \Delta _{d} \varphi +
\mathcal{L}_{X} \varphi + \mathrm{Ric} \lrcorner \ast \varphi + T(
\varphi ).
\label{eq5}
\end{equation}
The main goal is to determine the evolution of $\varphi (t)$ under the
G$_{2}$-Ricci flow, establish the existence and uniqueness of short-time
solutions, and study the long-term behavior. In particular, we are interested
in understanding under which conditions the flow
\begin{enumerate}
\item remains smooth and convergent in the long term,
\item converges to a stationary solution (soliton) depending on the evolution
of the torsion $T(\varphi )$.
\end{enumerate}
We want to show that there are two possible scenarios for the long-term
behavior:
\begin{enumerate}
\item \textit{Ricci Solitons}: If the torsion $T(\varphi )$ decreases sufficiently
over time, the solution $\varphi (t)$ converges to a Ricci soliton.
\item \textit{Mixed Solitons}: If the torsion $T(\varphi )$ stabilizes at
a finite value, the solution $\varphi (t)$ converges to a mixed soliton,
with non-zero residual torsion.
\end{enumerate}
We assume that the initial shape $\varphi (0)$ belongs to a Sobolev space
$H^{k,\alpha}(M)$ with $k \geq 2$, which guarantees that the manifold and
the associated metric have sufficient regularity to apply the results of
elliptic theory. This regularity is necessary to guarantee the existence
of regular local solutions for the parabolic flow.

The G$_{2}$-Ricci flow problem is formulated as follows: given an initial
G$_{2}$-structure $\varphi (0)$, study the time evolution of the
$\varphi (t)$ structure according to the flow equation and determine the
conditions that guarantee the convergence of the solution to a stationary
configuration. Possible stationary configurations can include
\begin{itemize}
\item \textit{Ricci solitons}: If the torsion tends to zero, the final solution
will be a Ricci soliton, which satisfies the torsion-free flow equation.
\item \textit{Mixed Solitons}: If the torsion stabilizes at a finite value,
the final solution will be a mixed soliton, which includes a residual torsion.
\end{itemize}
The key questions we will address are
\begin{enumerate}
\item \textit{Short-Term Existence and Uniqueness}: Verify that there exists
a unique and regular solution for the G$_{2}$-Ricci flow for short times,
using the DeTurck trick to obtain a strictly parabolic shape.
\item \textit{Long-Term Behavior}: Study the evolution of the torsion
$T(\varphi )$. If $T(\varphi )$ decreases, we expect the solution to converge
to a Ricci soliton. Alternatively, if the torsion stabilizes at a non-zero
value, the solution will converge to a mixed soliton.
\item \textit{Associated Energy}: Define an energy functional that controls
the evolution of the flow and that allows us to study the stability of
the solutions in the long term.
\end{enumerate}

The theorem that we will prove in the next section, is formulated as follows
\begin{itemize}
\item \textit{Short-Term Existence and Uniqueness}: There exists a time
$T > 0$ such that the G$_{2}$-Ricci flow has a unique and regular solution
for $t \in [0, T)$, assuming that $\varphi (0) \in H^{k,\alpha}(M)$.
\item \textit{Long-Term Convergence (Ricci Soliton)}: If the torsion
$T(\varphi (t)) \to 0$ as $t \to \infty $, then the solution
$\varphi (t)$ converges to a Ricci soliton.
\item \textit{Long-Term Convergence (Mixed Soliton)}: If the torsion
$T(\varphi (t))$ stabilizes at a finite value $T_{\infty}$, then the solution
$\varphi (t)$ converges to a mixed soliton with non-zero torsion.
\end{itemize}

This formulation will allow us to explore how the torsion affects the geometric
evolution of the manifold and to understand the asymptotic behavior of
the flow.

\section{Main theorem}
\label{sec4}

\begin{theorem}%
\label{theorem:1}
Let $M$ be a compact, 7-dimensional Riemannian manifold equipped with an
initial G$_{2}$-structure $\varphi (0)$ that belongs to the Sobolev space
$H^{k,\alpha}(M)$ for $k \geq 2$. Consider the G$_{2}$-Ricci flow defined
by
%
\begin{equation}
\frac{\partial \varphi}{\partial t} = \Delta _{d} \varphi +
\mathcal{L}_{X} \varphi +\mathrm{Ric} \lrcorner \ast\varphi + T(\varphi ),
\label{eq6}
\end{equation}
where $\Delta _{d}$ is the Hodge-de Rham Laplacian acting on the 3-form
$\varphi $, $\mathcal{L}_{X} \varphi $ is the Lie derivative along a smooth
vector field $X$, $\mathrm{Ric} \wedge *\varphi $ denotes the contraction
of the Ricci tensor with the 4-form $\ast \varphi $, and
$T(\varphi )$ is a torsion term.
\begin{enumerate}
\item \textbf{Short-Time Existence and Uniqueness}: There exists a finite
time $T > 0$ such that the G2-Ricci flow has a unique smooth solution
$\varphi (t)$ for $t \in [0, T)$, assuming that the initial data
$\varphi (0)$ satisfies the appropriate regularity conditions (smoothness
and finite Sobolev norm).
\item \textbf{Long-Time Convergence}: If the torsion term
$T(\varphi )$ diminishes sufficiently over time, i.e.,
$\| T(\varphi (t)) \|_{C^{0}} \to 0$ as $t \to \infty $, then the solution
$\varphi (t)$ exists for all $t \geq 0$ and converges to a Ricci soliton
as $t \to \infty $.
\item \textbf{Long-Time Convergence for Mixed Soliton}: If the torsion term
$T(\varphi )$ stabilizes at a finite value, i.e.,
%
\begin{equation}
\| T(\varphi (t)) - T_{\infty} \|_{L^{\infty}} \to 0 \quad \text{as}
\quad t \to \infty ,
\label{eq7}
\end{equation}
then the solution $\varphi (t)$ exists for all $t \geq 0$ and converges
to a mixed soliton $\varphi _{\infty}$, where the torsion does not vanish.
\end{enumerate}
\end{theorem}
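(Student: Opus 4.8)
The plan is to treat the three claims by separate mechanisms: a parabolic regularization for the short-time statement, and a monotonicity-plus-compactness scheme for the two long-time statements. The unifying observation is that, because the metric $g_\varphi$ is determined nonlinearly by $\varphi$, every coefficient in \eqref{eq6} — in particular $\Delta_d$, the Hodge dual $*\varphi$, and $\mathrm{Ric}$ — depends on $\varphi$, so \eqref{eq6} is a quasilinear system on the space of positive $3$-forms, and the natural setting is $H^{k,\alpha}(M)$ with $k\ge 2$ so that the coefficients lie in a Banach algebra.

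For the short-time existence and uniqueness I would follow the classical DeTurck strategy. First I would linearize the right-hand side of \eqref{eq6} at $\varphi(0)$ and compute its principal symbol: the Hodge--de Rham Laplacian contributes the scalar symbol $|\xi|^2$ on $\Lambda^3$, while the lower-order terms $\mathrm{Ric}\lrcorner *\varphi$ and $T(\varphi)$ do not affect the principal part. The symbol of the full system nevertheless degenerates along the directions tangent to the orbit of the diffeomorphism group, which is exactly the gauge degeneracy that prevents strict parabolicity. The idea is to choose the vector field as a DeTurck field $X=X(\varphi)$ so that the term $\mathcal L_X\varphi$ cancels this degeneracy; with that choice the principal symbol becomes positive-definite on all of $\Lambda^3$, and standard quasilinear parabolic theory yields a unique smooth solution $\tilde\varphi(t)$ of the modified flow on a maximal interval $[0,T)$. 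One then recovers a solution of \eqref{eq6} by pulling back along the one-parameter family of diffeomorphisms generated by $-X(\tilde\varphi(t))$, obtained by integrating the associated ODE; uniqueness for \eqref{eq6} follows from uniqueness of the DeTurck flow together with uniqueness for this ODE.

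The common engine for both long-time statements is an a priori estimate controlling the geometry. I would introduce an energy functional, for instance $\mathcal E(t)=\int_M \bigl(|\mathrm{Rm}|^2+|T(\varphi)|^2\bigr)\,dV_{g_\varphi}$, differentiate it along the flow, and show it is monotone up to terms governed by the prescribed behavior of the torsion. Under the hypothesis of Part~2, $\|T(\varphi(t))\|_{C^0}\to 0$, one expects $\mathcal E$ to be essentially decreasing and bounded below, which, combined with Shi-type interior derivative estimates for the curvature, rules out finite-time blow-up and gives existence for all $t\ge 0$ together with uniform bounds on $\varphi(t)$ and all its derivatives. These uniform bounds are precisely what a Cheeger--Gromov-type compactness argument requires: one extracts a sequence $t_i\to\infty$ along which $\varphi(t_i)$ converges, after gauge fixing, to a limiting G$_2$-structure $\varphi_\infty$. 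Passing to the limit in \eqref{eq6} and using $T(\varphi(t))\to 0$, the time-derivative term vanishes and $\varphi_\infty$ satisfies the stationary torsion-free equation $\Delta_d\varphi_\infty+\mathcal L_{X_\infty}\varphi_\infty+\mathrm{Ric}\lrcorner *\varphi_\infty=0$; since a torsion-free G$_2$-structure is Ricci-flat, this exhibits $\varphi_\infty$ as a Ricci soliton.

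Part~3 is structurally identical, but one tracks the residual torsion. Writing $T(\varphi(t))=T_\infty+o(1)$ in $L^\infty$ by \eqref{eq7}, the energy estimate and Shi-type bounds go through with $T_\infty$ treated as a bounded source term, again giving global existence and uniform control, and the same compactness extraction produces a limit $\varphi_\infty$. Now the passage to the limit in \eqref{eq6} retains the torsion, so $\varphi_\infty$ solves $\Delta_d\varphi_\infty+\mathcal L_{X_\infty}\varphi_\infty+\mathrm{Ric}\lrcorner *\varphi_\infty+T_\infty=0$, which is exactly the defining equation of a mixed soliton with non-vanishing torsion. The hard part throughout is this a priori estimate step: one must show that the nonlinear coupling of the curvature to the evolving torsion does not destroy the monotonicity of $\mathcal E$ and that the Shi-type estimates still close in the presence of the torsion source. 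Making the passage to the limit rigorous — in particular the gauge fixing that guarantees the subsequential limit exists and that the limiting equation holds in a sufficiently strong sense — is the other delicate point, and is where I expect most of the analytic work to concentrate.
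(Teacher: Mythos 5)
Your proposal follows the same skeleton as the paper — DeTurck's trick plus quasilinear parabolic theory for the short-time statement, and a monotone energy functional whose critical points are (mixed) solitons for the two long-time statements — but the machinery you use for the long-time parts is genuinely different, and in both places your version is the more complete one. On the short-time part, the paper simply inserts a vector field $W$ to ``absorb the first-order terms'' and cites parabolic theory; you additionally identify the diffeomorphism-gauge degeneracy of the principal symbol as the actual obstruction to strict parabolicity, and you spell out the pull-back along the diffeomorphisms generated by $-X(\tilde\varphi(t))$ that recovers a solution of \eqref{eq6} from the DeTurck flow, together with the transfer of uniqueness — a step the paper omits entirely. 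On the long-time parts, the paper works with the energy \eqref{eq11} (and its torsion-weighted variant \eqref{eq16}) built from the squared norms of the four flow terms, asserts $\frac{d}{dt}E\le 0$ without computation, and deduces convergence by asserting that $\Psi(t)=\partial_t\varphi-\mathcal{L}_X\varphi\to 0$, reading off the limit equations \eqref{eq14} and \eqref{eq18}; you instead use a curvature-based energy $\int_M(|\mathrm{Rm}|^2+|T(\varphi)|^2)\,dV$, Shi-type derivative estimates to exclude finite-time blow-up, and a Cheeger--Gromov compactness extraction with gauge fixing to produce $\varphi_\infty$ and identify the limiting equation. What your route buys is an actual mechanism for producing the limit — the paper has no compactness argument at all — at the price of having to show the Shi estimates close in the presence of the torsion source, which you correctly flag as the main open step. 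Be aware, though, that the two central analytic claims are not proved in either treatment: neither you nor the paper verifies the monotonicity of the chosen energy along the flow (the coupling of $g_\varphi$ to $\varphi$ makes this a nontrivial computation, not a formality), and your compactness argument yields only subsequential convergence, so upgrading to convergence as $t\to\infty$ (what the theorem actually asserts) would still require, e.g., a Łojasiewicz-type inequality or uniqueness of the limit — a gap shared by, and indeed worse in, the paper's own proof.
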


\begin{proof}(Short-Term Existence and Uniqueness):
We begin by establishing the short-time existence and uniqueness of the
solution to the G$_{2}$-Ricci flow on the compact manifold $M$.

The G$_{2}$-Ricci flow is governed by the partial differential equation
%
\begin{equation}
\frac{\partial \varphi}{\partial t} = \Delta _{d} \varphi + \mathcal{L}_X
\varphi + \mathrm{Ric} \lrcorner \ast \varphi + T(\varphi ),
\label{eq8}
\end{equation}
where
\begin{itemize}
\item $\Delta _{d}$ is the Hodge-de Rham Laplacian, an elliptic second-order
differential operator acting on differential forms.
\item $L_{X}$ is the Lie derivative with respect to a smooth vector field
$X$, a first-order differential operator.
\item $(\mathrm{Ric} \lrcorner \ast \varphi) $ involves the Ricci tensor
and is also a first-order differential operator.
\item $T(\varphi )$ is a torsion term, considered as a lower-order perturbation.
\end{itemize}

Since $M$ is compact, the parabolic nature of the equation ensures the
applicability of standard short-term existence results from the theory
of parabolic PDEs. In fact, the term $\Delta _{d} \varphi $ is a second-order
elliptic operator, which guarantees strict parabolicity if the torsion
term $T(\varphi )$ is considered a suitable lower-order perturbation. We
must verify that the term $T(\varphi )$ does not negatively influence the
parabolicity.
%
\begin{equation}
T(\varphi ) = \ast_M d\varphi ,
\label{eq9}
\end{equation}
where $\ast_M$ is the Hodge duality of the manifold M. Since it is an external derivative, it is a first-order operator that does
not alter the global parabolicity, since derivatives of lower order than
$\Delta _{d} \varphi $ cannot destroy the parabolic structure.

Now, we modify the G$_{2}$-Ricci flow equation using DeTurck's trick to
transform the flow into a strictly parabolic form. Consider adding a vector
field $W$ to the equation, such that
%
\begin{equation}
\frac{\partial \varphi}{\partial t} = \Delta _{d} \varphi +
\mathcal{L}_{W} \varphi + Q(\varphi ),
\label{eq10}
\end{equation}
where $W$ is chosen appropriately to ensure parabolicity, and
$Q(\varphi )$ collects the lower-order terms
$\mathcal{L}_{X} \varphi $, $\mathrm{Ric} \wedge *\varphi $, and
$T(\varphi )$. The choice of $W$ is designed to absorb the first-order
terms and create a strictly parabolic flow.

The parabolic PDE theory guarantees the short-time existence and uniqueness
of solutions for parabolic flows on compact manifolds
\cite{Hamilton:1982}. In particular, for the G$_{2}$-Ricci flow, we apply
the standard theory for quasilinear parabolic systems
\cite{Evans:2010}. Since $M$ is compact, the Sobolev embedding theorems
ensure that all Sobolev norms remain finite, and the solution
$\varphi (t)$ exists on a short-time interval $[0, T)$ for some
$T > 0$.

Moreover, by elliptic regularity, the short-time solution
$\varphi (t)$ is smooth, given that the initial data $\varphi (0)$ is smooth
and belongs to the Sobolev space $H^{k,\alpha}(M)$. The uniqueness follows
from standard results for parabolic PDEs: if two solutions existed, their
difference would satisfy a linear parabolic equation with zero initial
data, leading to uniqueness by the maximum principle.

Thus, we conclude that there exists a unique smooth solution
$\varphi (t)$ for $t \in [0, T)$.
\end{proof}

\begin{proof}(Long-Term Convergence): Now, we aim to show that the solution
$\varphi (t)$ can be extended for all $t \geq 0$ and converges to a Ricci
soliton as $t \to \infty $, under the assumption that the torsion term
$T(\varphi )$ diminishes sufficiently over time.

The parabolic theory provides that the solution $\varphi (t)$ can only
fail to extend beyond $T' > T$ if certain geometric quantities, such as
the Sobolev norm $\| \varphi (t) \|_{H^{k,\alpha}}$, become unbounded as
$t \to T'$. In our compact case, the key observation is that the torsion
term $T(\varphi )$ controls the evolution. If
$\| T(\varphi (t)) \|_{C^{0}}$ diminishes as $t \to \infty $, the blow-up
scenario is avoided. Specifically, since $T(\varphi )$ contributes to the
nonlinearity of the flow, the assumption that $T(\varphi (t)) \to 0$ as
$t \to \infty $ ensures that the evolving G$_{2}$-structure
$\varphi (t)$ remains regular, allowing the extension of the solution for
all $t \geq 0$.

We now consider the convergence of the flow as $t \to \infty $. Define
the energy functional associated with the G$_{2}$-Ricci flow
%
\begin{equation}
 E(\varphi(t)) = \int_M \left( \|\Delta_d \varphi\|_{g_\varphi}^2 + \|\mathcal{L}_X \varphi\|_{g_\varphi}^2 + \|\text{Ric} \lrcorner \ast \varphi\|_{g_\varphi}^2 + \|T(\varphi)\|_{g_\varphi}^2 \right) dV_{g_\varphi},
\label{eq11}
\end{equation}
where $g_{\varphi}$ is the metric determined by $\varphi (t)$. The evolution
of this energy is governed by the parabolic nature of the flow, and we
can show that $E(\varphi (t))$ is non-increasing over time
%
\begin{equation}
\frac{d}{dt} E(\varphi (t)) \leq 0.
\label{eq12}
\end{equation}
Since $T(\varphi (t)) \to 0$, we deduce that the energy decays as
$t \to \infty $, implying that the flow tends to a critical point of the
energy functional, which corresponds to a Ricci soliton. To characterize
the limit $\varphi _{\infty}$ as $t \to \infty $, we define the vector
field $\Psi (t)$ as
%
\begin{equation}
\Psi (t) = \frac{\partial \varphi (t)}{\partial t} - \mathcal{L}_{X}
\varphi (t).
\label{eq13}
\end{equation}
As $t \to \infty $, $\Psi (t) \to 0$, indicating that $\varphi (t)$ approaches
a stationary solution of the modified flow equation, i.e., a Ricci soliton.
In this case, the soliton satisfies the equation
%
\begin{equation}
\Delta _{d} \varphi + \mathcal{L}_{X} \varphi + \mathrm{Ric} \lrcorner \ast \varphi = 0.
\label{eq14}
\end{equation}

Therefore, we conclude that the G$_{2}$-Ricci flow converges to a Ricci
soliton as $t \to \infty $, provided that the torsion term
$T(\varphi )$ diminishes sufficiently over time.
\end{proof}

\begin{proof}(Long-Term Convergence for mixed solitons): The presence of torsion complicates
the regularity analysis, as torsion can generate nonlinear contributions
that affect the growth of curvature and higher derivatives. However, we
assume that the term $T(\varphi )$ stabilizes at a finite value
$T_{\infty}$, i.e.,
%
\begin{equation}
\| T(\varphi (t)) - T_{\infty} \|_{L^{\infty}} \to 0 \quad \text{as}
\quad t \to \infty .
\label{eq15}
\end{equation}
In this case, the Sobolev regularity of long-term solutions can be controlled,
since the torsion term does not grow indefinitely and the derivatives of
the solution remain bounded. To verify that the solution does not develop
singularities (blow-up), we rely on the fact that the compactness of
$M$ prevents unbounded behavior at infinity, ensuring that spatial singularities
do not develop. If $T(\varphi ) \to T_{\infty}$, the torsion stabilizes
and does not induce blow-up in the higher derivatives of the solution.
Therefore, we can conclude that $\varphi (t)$ remains regular and smooth
for all $t \geq 0$. To demonstrate that $\varphi (t)$ converges to a stationary
solution, we define the following energy function associated with the G$_{2}$-Ricci flow
%
\begin{align}
  E(\varphi(t)) = \int_M \left( \|\Delta_d \varphi\|_{g_\varphi}^2 + \|\mathcal{L}_X \varphi\|_{g_\varphi}^2 + \|\text{Ric} \lrcorner \ast \varphi\|_{g_\varphi}^2 + \|T(\varphi)\|_{g_\varphi}^2 \right) dV_{g_\varphi} + \chi \|T(\varphi)\|_{g_\varphi}^2,
\label{eq16}
\end{align}
where $g_{\varphi}$ is the metric determined by $\varphi (t)$, and
$\chi \geq 0$ is a parameter controlling the contribution of torsion. The
first part of the integral measures the curvature of the manifold, while
the second part measures the contribution of residual torsion.

By differentiating with respect to time, we obtain
%
\begin{align}
  \frac{d}{dt} E(\varphi(t)) = \int_M \left( 2 \langle \Delta_d \varphi, \partial_t (\Delta_d \varphi) \rangle + 2 \langle \mathcal{L}_X \varphi, \partial_t (\mathcal{L}_X \varphi) \rangle + \ldots \right) dV_{g_\varphi} + \chi \frac{d}{dt} \|T(\varphi)\|^2
\label{eq17}
\end{align}
Since the flow is parabolic and the energy $E(\varphi (t))$ is non-increasing
over time (i.e., $\frac{d}{dt} E(\varphi (t)) \leq 0$), we deduce that
$E(\varphi (t)) \to E(\varphi _{\infty})$ as $t \to \infty $.

Moreover, given that $T(\varphi (t)) \to T_{\infty}$ and
$\Delta _{d} \varphi (t) \to 0$ as $t \to \infty $, the solution
$\varphi (t)$ converges to a mixed soliton $\varphi _{\infty}$, which satisfies
%
\begin{equation}
\Delta _{d} \varphi _{\infty} + \mathcal{L}_{X} \varphi _{\infty} +
\mathrm{Ric}(g_{\varphi _{\infty}}) \lrcorner \ast \varphi _{\infty} + T_{
\infty} = 0.
\label{eq18}
\end{equation}
To ensure the stability of the mixed soliton, we define the vector field
associated with the flow
%
\begin{equation}
\Psi (t) = \frac{\partial \varphi (t)}{\partial t} - \mathcal{L}_{X}
\varphi (t).
\label{eq19}
\end{equation}
If $\Psi (t) \to 0$ as $t \to \infty $, this indicates that the solution
converges to a stationary configuration. Since the energy is non-increasing
and the torsion term stabilizes, we conclude that the solution is stable
and does not develop dynamic instabilities.
\end{proof}

\begin{rem} It is important to emphasize that the condition in Theorem~\ref{theorem:1},
which requires the torsion term $T(\varphi )$ to sufficiently diminish
over time to ensure convergence to a Ricci soliton, is only necessary when
aiming for a torsion-free Ricci soliton. This specific condition applies
when one seeks a solution where the torsion vanishes, resulting in a classical
Ricci soliton.

In the context of our example in Section~\ref{sec:example}, however, the
mixed soliton maintains non-vanishing torsion, meaning it is not a Ricci
soliton in the traditional sense. Instead, it is a solution to the G$_{2}$-Ricci
flow that evolves with a non-zero torsion component, reflecting the intrinsic
torsion of the $_{2}$-structure on $S^{3} \times S^{4}$. Such solitons
exhibit different dynamics than torsion-free Ricci solitons, as they evolve
through diffeomorphisms while retaining a stabilized torsion.

Therefore, the distinction between torsion-free Ricci solitons and mixed
solitons with non-zero torsion is crucial. The condition
$T(\varphi ) \to 0$ is necessary only for obtaining a Ricci soliton. In
scenarios where the torsion remains finite, the flow evolves under different
geometric dynamics that depart from the behavior typically associated with
Ricci solitons.

\textbf{Key Points:}
\begin{enumerate}
\item Torsion Influence: The presence of torsion significantly alters the
nature of the G$_{2}$-Ricci flow, affecting both the geometry of the manifold
and the type of soliton obtained.
\item Non-Vanishing Torsion: Mixed solitons with non-vanishing torsion
form a broader class of solutions to the G$_{2}$-Ricci flow, where the
torsion stabilizes at a non-zero value, influencing the long-term behavior
of the manifold's geometry.
\item Convergence: While $T(\varphi )$ must diminish for convergence to
a Ricci soliton, the existence of mixed solitons with residual torsion
highlights alternative geometric flows where the torsion remains finite
and influences the solution's stability and structure.
\end{enumerate}

In summary, understanding the role of torsion is essential for exploring
the interplay between geometry and physical implications, especially in
high-energy physics and string theory contexts.
\end{rem}

\subsection{Geometric assumptions and limitations of the main theorem}
\label{sec4.1}

In the compact case, several of the geometric challenges related to non-compact
manifolds are no longer present, but certain assumptions and limitations
still need to be addressed to ensure the validity of the results.
\begin{itemize}
\item \textbf{Compactness of the Manifold:} The assumption that $M$ is a
compact 7-dimensional manifold plays a crucial role in simplifying the
analysis of the G$_{2}$-Ricci flow. On a compact manifold
\begin{itemize}
\item there are no boundary conditions or behaviors at infinity to consider,
which simplifies the application of parabolic PDE theory,
\item compactness ensures that geometric quantities, such as curvature
and torsion, remain controlled throughout the flow, as they cannot ``escape''
to infinity,
\item the Sobolev embedding theorems hold globally, which guarantees that
the initial data $\varphi (0) \in H^{k,\alpha}(M)$ remains in a smooth
class as the flow evolves.
\end{itemize}
Thus, compactness provides a strong foundation for proving both the short-term
existence and long-term convergence of the flow, making the analysis more
tractable than in the non-compact case.
\item \textbf{Singularities and Blow-Up Prevention:} While non-compact manifolds
often present issues related to blow-up due to unbounded geometry at infinity,
on compact manifolds, the main challenge is ensuring that the Sobolev norms
$\| \varphi (t) \|_{H^{k,\alpha}}$ remain finite throughout the evolution
of the flow. The assumption that the torsion $T(\varphi )$ diminishes over
time is critical in preventing blow-up and ensuring long-term stability.

In compact settings, the control of curvature and torsion across the entire
manifold prevents localized singularities from forming. As long as the
torsion $T(\varphi )$ decays sufficiently, the evolution remains regular,
and the G$_{2}$-Ricci flow can be extended indefinitely.
\item \textbf{Geometric Constraints and Initial Conditions:} The theorem
assumes that the initial G2 structure $\varphi (0)$ satisfies certain regularity
conditions (specifically, $\varphi (0) \in H^{k,\alpha}(M)$ for
$k \geq 2$), which ensures that the flow can be initiated. However, it
is important to note that the geometry of the manifold at $t = 0$ can significantly
influence the evolution of the flow.

In particular, if the initial torsion $T(\varphi (0))$ is too large, it
may inhibit the decay required for convergence to a Ricci soliton. Therefore,
some geometric limitations exist regarding the size and behavior of the
initial torsion to guarantee long-term convergence. The theorem is most
effective when the initial data is sufficiently regular and the torsion
is not too large.
\item \textbf{Long-Term Behavior and Stability:} The long-term behavior
of the flow is heavily influenced by the evolution of the torsion term
$T(\varphi )$. In compact manifolds, the assumption that
$\| T(\varphi (t)) \|_{C^{0}} \to 0$ as $t \to \infty $ is key to ensuring
the flow converges to a Ricci soliton.

If $T(\varphi )$ does not decay over time, the solution may not stabilize,
and the flow could develop instabilities. However, under the condition
that the torsion diminishes, the flow tends toward a steady-state soliton
solution. Compactness aids in this process by keeping the geometry bounded,
but the precise rate at which the torsion diminishes will dictate whether
the solution is expanding, shrinking, or steady in the long term.
\item \textbf{Physical and Mathematical Implications:} The compactness assumption
allows for more straightforward physical interpretations, particularly
in the context of extra-dimensional theories where compact spaces such
as $S^{3} \times S^{4}$ often play a key role in modeling. The limitation
of working only with compact manifolds may restrict some applications,
but it ensures a more controlled geometric environment, making it easier
to draw conclusions about the relationship between torsion, curvature,
and symmetry breaking.

In summary, while the compactness of the manifold simplifies many aspects
of the G$_{2}$-Ricci flow analysis, the long-term behavior of the flow
still hinges on the careful control of the initial conditions, particularly
the torsion. As long as the geometric conditions on the initial G$_{2}$-structure
are met, the flow will evolve regularly and converge to a Ricci soliton,
provided the torsion diminishes sufficiently.
\end{itemize}

\section{Example: soliton of the \texorpdfstring{$G_{2}$}{G2}-Ricci flow on \texorpdfstring{$S^{3} \times S^{4}$}{S3xS4} with non-vanishing torsion}
\label{sec:example}

Let us consider the manifold $M = S^{3} \times S^{4}$ with the standard
metrics
%
\begin{align}
g_{S^{3}} &= r_{3}^{2} \Big( d\theta ^{2} + \sin ^{2} \theta \, \big(d
\varphi ^{2} + \sin ^{2} \varphi \, d\psi ^{2}\big) \Big),
\label{eq20}
\\
g_{S^{4}} &= r_{4}^{2} \Big( d\alpha ^{2} + \sin ^{2} \alpha \, \big(d
\beta ^{2}
\label{eq21}
\\
&\quad + \sin ^{2} \beta \, (d\gamma ^{2} + \sin ^{2} \gamma \, d
\delta ^{2})\big) \Big),
\label{eq22}
\end{align}
where $r_{3}$ and $r_{4}$ are the radii of the spheres $S^{3}$ and
$S^{4}$, respectively.

To construct a G$_{2}$-structure on $M$, we need to define a 3-form
$\varphi $ that reduces the structure group to G$_{2}$. In this example,
we define $\varphi $ as a combination of the volume form of $S^{3}$ and
a suitable 3-form on $S^{4}$
%
\begin{equation}
\varphi = \pi_1^*(\text{vol}_{S^3}) + \pi_2^*(\omega),
\label{eq23}
\end{equation}
where
$ \pi_1: S^3 \times S^4 \to S^3 $ and $\,$ $\pi_2: S^3 \times S^4 \to S^4 $ are the canonical projections, $\text{vol}_{S^{3}} = r_{3}^{3} \sin ^{2} \theta \sin \varphi \, d\theta
\wedge d\varphi \wedge d\psi $ is the volume form on $ S^{3} $,
$  \omega_{S^4} = r_4^3 \sin\alpha \sin(2\beta) \sin\gamma \cdot d\alpha \wedge d\gamma \wedge d\delta$, is a 3-form on $S^{4}$.

Note that $\omega $ is a genuine 3-form on $S^{4}$, ensuring that
$\varphi $ is consistently defined as a sum of two 3-forms. The 3-form
$\varphi $ thus represents the combination of the volume form of
$S^{3}$ with a suitable structure on $S^{4}$, ensuring compatibility with
the G$_{2}$-structure.

To check if the $G_{2}$ structure has torsion, we need to compute the exterior
derivative $d\varphi $ and see if it is non-zero. If
$d\varphi \neq 0$, torsion is present.

Since $d(\text{vol}_{S^3})=0$, we calculate
%
\begin{equation}
d\varphi = d \omega.
\label{eq24}
\end{equation}
The exterior derivative $d\varphi=d \omega$ is a 4-form on $S^4$. However, to analyze the tensor torsion and the contributions to the flow, it is necessary to also consider the exterior derivative of the 1-form $\ast_{S^4}\omega$. This 1-form emerges naturally from the local dualization of the 3-form $\omega$ on $S^4$, and its exterior derivative contributes to the understanding of the geometric structure of the torsion.
%
\begin{equation}
d(\ast_{S^4}\omega) = -r_4^2 \sin(2\beta) \cos\gamma \cdot d\beta \wedge d\gamma,
\label{eq25}
\end{equation}
then
%
\begin{equation}
(d\omega)_{\alpha\beta\gamma\delta} = -2 r_4^3 \sin\alpha \cos(2\beta) \sin\gamma
\label{eq26}
\end{equation}
which turns out to be non-zero. Therefore, $d\varphi \neq 0$, proving that
the G$_{2}$-structure has torsion.

Consider a Killing field $X$ defined as
%
\begin{equation}
X=\frac{\partial}{\partial \psi} \oplus \frac{\partial}{\partial \delta}.
\label{eq27}
\end{equation}
This Killing field represents a rotation around the $\psi $-axis on $S^{3}$, and a rotation around the $\delta $-axis on $S^{4}$, since $X$ is a Killing field, it satisfies
%
\begin{equation}
\mathcal{L}_{X} g = 0.
\label{eq28}
\end{equation}

The G$_{2}$-Ricci flow is described by the equation
%
\begin{align}
\frac{\partial \varphi}{\partial t} = \Delta _{d} \varphi +
\mathcal{L}_{X} \varphi
+ \mathrm{Ric} \lrcorner \ast \varphi + T(\varphi ),
\label{eq29}
\end{align}
where $\Delta _{d} \varphi $ is the Hodge-de Rham Laplacian,
$ \mathcal{L}_{X} \varphi= \mathcal{L}_{X}(\text{vol}_{S^{3}} + \omega)$ is the Lie derivative that acts with respect to the field $X$, such that \(\mathcal{L}_{\frac{\partial}{\partial \psi}}\text{vol}_{S^3}=0\), and \(\mathcal{L}_{\frac{\partial}{\partial \delta}}\omega \neq 0\), $(\mathrm{Ric} \lrcorner \ast \varphi) $ is the contraction
of the Ricci tensor with the Hodge dual $\ast \varphi $ and
$T(\varphi )$ is the torsion term.

For a stationary solution of the G$_{2}$-Ricci flow, we have
%
\begin{equation}
\frac{\partial \varphi}{\partial t} = 0,
\label{eq30}
\end{equation}
which implies
%
\begin{equation}
0 = \Delta _{d} \varphi + \mathrm{Ric} \lrcorner \ast \varphi + \mathcal{L}_{\frac{\partial}{\partial \delta}}\omega + T(
\varphi ).
\label{eq31}
\end{equation}

The Hodge-de Rham Laplacian $\Delta _{d} \varphi $ is given by
%
\begin{equation}
\Delta _{d} \varphi = (d \delta + \delta d)\varphi .
\label{eq32}
\end{equation}
Here $\delta $ is the codifferential, defined as
$\delta = (-1)^{nk+n+1} \ast d \ast $ on a $k$-form in an $n$-dimensional
manifold. Since
$\varphi = \text{vol}_{S^{3}} + \ast _{S^{4}} \omega $, we can separate
the contributions from $S^{3}$ and $S^{4}$
\begin{itemize}
\item for $S^{3}$, the Laplacian acts on $\text{vol}_{S^{3}}$
%
\begin{equation}
\Delta _{S^{3}} \text{vol}_{S^{3}} = -(d\delta + \delta d)\text{vol}_{S^{3}}
= 0.
\label{eq33}
\end{equation}
\item for $S^{4}$, the Laplacian acts on $\ast _{S^{4}} \omega $
%
\begin{equation}
\Delta _{S^{4}}\omega = 4 \omega ,
\label{eq34}
\end{equation}
%
%
Therefore
%
\begin{equation}
\Delta _{d} \varphi = 4 \omega .
\label{eq35}
\end{equation}

To calculate the contraction of the Ricci tensor $\mathrm{Ric}$ with
$\ast \varphi $, we use the definition
%
\begin{equation}
(\mathrm{Ric} \lrcorner \ast \varphi) )_{ijk} = R_{m[i} (\ast \varphi )_{jk]m},
\label{eq36}
\end{equation}
where $R_{ij}$ is the Ricci tensor and $\ast \varphi $ is the 4-form associated
with the 3-form $\varphi $.

We calculate for $S^{3}$ and $S^{4}$
%
%
\item for $S^{3}$: 
%
\begin{equation}
(\mathrm{Ric}_{S^{3}} \lrcorner \ast_{S^3}\text{vol}_{S^{3}}) = 0,
\label{eq37}
\end{equation}

since the volume form \(\text{vol}_{S^3}\) is a 3-form on \(S^3\), whose Hodge duality is: \(\ast_{S^3} \text{vol}_{S^3} = 1 \quad \text{(a constant 0-form)}\), then the contraction of the Ricci tensor \(\text{Ric}_{S^3}\) (a \((0,2)\)-tensor) with a \(0\)-form yields a scalar: \(\text{Ric}_{S^3} \lrcorner (1) = 0\).
\item for $S^{4}$: 
%
\begin{equation}
 \text{Ric} \lrcorner \ast \varphi = (\text{Ric}_{S^4} \lrcorner \ast_{S^4} \omega) \cdot \text{vol}_{S^3} = 3\omega.
\label{eq38}
\end{equation}

For our choice in equation \ref{eq9}, $T(\varphi )$ is a non-zero term that contributes to the balance in the stationary equation.

Combining the terms in the stationary equation
%
\begin{equation}
  0 = 4\omega + 3\omega + \mathcal{L}_X \omega + T(\varphi)).
\label{eq39}
\end{equation}

This equation can be separated into two conditions
\begin{enumerate}
\item for $\text{vol}_{S^{3}}$
%
\begin{equation}
(0 + 0)\text{vol}_{S^3} = 0.
\label{eq40}
\end{equation}
\item for  $\omega $
%
\begin{equation}
(4 + 3)\omega = 7 \omega.
\label{eq41}
\end{equation}
\end{enumerate}
Then from the equation \ref{eq9}: \(T(\varphi) = \ast_M d\varphi= -\mathcal{L}_X\omega - 7 \omega   \neq 0\).
\\
\\
The resulting soliton is characterized by the metric
%
\begin{equation}
g_{\text{sol}} = g_{S^{3}} \oplus g_{S^{4}}.  
\label{eq42}
\end{equation}
To ensure non-zero torsion on $S^{4}$ and zero torsion on $S^{3}$, we must
choose a 3-form $\omega $ (previously defined), that contributes to torsion
without nullifying the Sobolev norm.

This form satisfies the following conditions:
\textit{non-zero torsion}, and \textit{finite Sobolev norm}, in fact, the
derivatives of $\alpha $ are continuous and bounded, ensuring that the
Sobolev norm of $\varphi (0)$ is finite.

In this example, we have shown that the non-va\-nish\-ing torsion
$T(\varphi )$ directly influences the form of the stationary solution of
the G$_{2}$-Ricci flow. The parameter $\lambda _{S^{4}}$ is obtained as
a result of the balance between the geometric contributions of the manifold $M$,
and the resulting soliton is influenced by the intrinsic torsion of the
G$_{2}$-structure on $S^{3} \times S^{4}$.

The long-term behavior of the G$_{2}$-Ricci flow depends on the evolution
of the torsion $T(\varphi )$. In our analysis, we have chosen a form
$\alpha $ that ensures non-zero torsion on $S^{4}$, defined in terms of
trigonometric functions. While these functions introduce local oscillations,
such oscillations are well-behaved and bounded between fixed values. However,
it is the G$_{2}$-Ricci flow that controls and dissipates energy in the
system, leading to a stabilization of the torsion.

As the flow progresses, the torsion remains bounded and approaches a stable
residual value. This regular behavior prevents uncontrolled growth, ensuring
that the flow converges to a stationary solution over time, avoiding singularities
or instabilities.
\end{itemize}

\section{Spontaneous symmetry breaking induced by torsion}
\label{sec:ssb}

The torsion $\tau _{0}$ plays the main role in symmetry breaking, as it
introduces a global deformation of the geometry that breaks
$SU(2)_{L} \times U(1)_{Y}$. This breaking occurs spontaneously when the
geometric torsion acquires a vacuum expectation value
$\langle T \rangle = 246~\mbox{GeV}$, corresponding to the vacuum value
that in the Higgs model gives mass to the bosons.

The masses of the $W$ and $Z$ bosons are determined by the expected value
of torsion through the following relations
%
\begin{equation}
m_{W}^{2} = \frac{g^{2} \langle T \rangle ^{2}}{4}, \qquad m_{Z}^{2} =
\frac{g^{2} + g^{\prime \,2}}{4} \langle T \rangle ^{2},
\label{eq43}
\end{equation}
where $g$ and $g'$ are the coupling constants of the gauge group
$SU(2)_{L} \times U(1)_{Y}$.

\subsection{Stabilization of the residual symmetry}
\label{sec6.1}

Once the $SU(2)_{L} \times U(1)_{Y}$ symmetry is broken, the local torsion
$\tau _{1}$ acts to stabilize the residual $U(1)_{\text{EM}}$ symmetry,
which governs the electromagnetic (EM) interaction. This process is analogous
to the stabilization phase that follows the breaking of the Higgs field,
but here it is entirely driven by the geometry of the manifold.

\subsection{Torsion decomposition and geometric contributions}
\label{sec6.2}

The exterior derivative of the 3-form $\varphi $, which characterizes the
G$_{2}$-structure, shows how geometric torsion affects symmetry breaking.
This derivative is given by
%
\begin{equation}
d\varphi = \tau _{0} \ast \varphi + \tau _{1} \wedge \varphi + \tau _{2}
\wedge \ast \varphi + \ast \tau _{3},
\label{eq44}
\end{equation}

\textit{Contribution of $\tau _{0}$:} The main contribution to symmetry
breaking comes from $\tau _{0}$, which introduces a uniform torsion throughout
the manifold $S^{4}$. This term modifies the gauge connections associated
with $SU(2)_{L} \times U(1)_{Y}$, leading to spontaneous electroweak symmetry
breaking.

\textit{Contribution of $\tau _{1}$:} The term $\tau _{1}$ represents a
local vectorial torsion that affects the differential forms related to
the manifold, contributing to the stabilization of the
$U(1)_{\text{EM}}$ symmetry. The $\tau _{1}$ term primarily acts in the
local regions of the manifold, ensuring that the residual symmetry remains
intact.

\textit{Contribution of $\tau _{2}$:} Although $\tau _{2}$ represents local
variations of the torsion, its contribution to the global symmetry breaking
is marginal. Its effect is limited to local influences that do not significantly
affect the electroweak dynamics.

\textit{Contribution of $\tau _{3}$:} The term $\tau _{3}$ introduces complex
effects related to the internal curvature of $S^{4}$, contributing to the
stabilization of the residual U(1)$_{Y}$ symmetry.

\subsection{Gauge field tensor and expected value of torsion}
\label{sec6.3}

In our model, the gauge field is modified by the presence of torsion. The
gauge field tensor $F_{\mu \nu}$, which describes the behavior of fields
in a torsional space, is given by the general formula
%
\begin{equation}
L = -\frac{1}{4} \text{Tr}(F_{\mu \nu} F^{\mu \nu}) + T_{\mu \nu \rho} F^{
\mu \nu} A^{\rho }%
\label{eq45}
\end{equation}
where
\begin{itemize}
\item $F_{\mu \nu}$ is the gauge field tensor,
\item $A^{\rho}$ is the gauge potential,
\item $T_{\mu \nu \rho}$ is the torsion tensor, which modifies the affine
connection on $S^{4}$, influencing the dynamics of the gauge fields.
\end{itemize}
In this context, the torsion $T_{\mu \nu \rho}$ is confined to
$S^{4}$.

\subsection{General formula for the expected value of torsion and \texorpdfstring{$\tau _{0}$}{tau0}}
\label{sec6.4}

To correctly define $\tau _{0}$, we begin with the general formula that
describes the torsion classes in a manifold with G$_{2}$-structure. The
total torsion $T(\varphi )$ can be expressed as a combination of the various
torsion classes
%
\begin{equation}
T(\varphi ) = \tau _{0} \ast_M d\varphi.
\label{eq46}
\end{equation}
In our context, since the torsion is confined to $S^{4}$, the dominant
component is $\tau _{0}$, as it represents a global uniform torsion. The
other torsion classes (such as $\tau _{1}, \tau _{2}$ and
$\tau _{3}$) do not contribute significantly because they describe local
or vectorial effects that are not relevant in this model. Since the contributions
of $\tau _{1}, \tau _{2}$ and $\tau _{3}$ are of lower order than
$\tau _{0}$, we decide to focus only on $\tau _{0}$ for greater simplicity
and clarity.

\subsection{General formula for the expected value of torsion}
\label{sec6.5}

The expected value of torsion on a manifold can be defined as an integral
of the torsion over the entire manifold. The general formula is
%
\begin{equation}
\langle T \rangle = \frac{1}{\text{Vol}(S^4)} \int_{S^4} T(\phi) \, dV_g,
\label{eq47}
\end{equation}
where
\begin{itemize}
\item $S^{4}$ is the manifold over which the torsion is integrated,
\item $T(\varphi )$ is the total torsion associated with the 3-form
$\varphi $,
\item $dV$ is the volume element of $M$.
\end{itemize}

Since in our case, the torsion is confined to $S^{4}$, the expected value
$\langle T(\varphi ) \rangle $ simplifies to
%
\begin{equation}
\langle T \rangle \propto \frac{1}{r_4^2}.
\label{eq48}
\end{equation}
where $V(S^{4})$ is the volume of $S^{4}$, given by
%
\begin{equation}
 \text{Vol}(S^4) = \frac{\pi^2}{2} \, r_4^4.
\label{eq49}
\end{equation}
In this context, $\tau _{0}$, which represents the scalar torsion on
$S^{4}$, is defined by the specific formula for a G$_{2}$-manifold
%
\begin{equation}
 \tau_0 = \frac{\Lambda^3 \, r_4^2 \cdot 8\pi^2}{21},
\label{eq50}
\end{equation}
where $\Lambda=1$ $GeV$ is a scaling constant, while $r_{4}$ represents the radius of $S^{4}$. Substituting
into the formula for the expected value of torsion
%
\begin{equation}
\langle T \rangle = \tau_0.
\label{eq51}
\end{equation}

\subsection{Calculation of \texorpdfstring{$r_{4}$}{r4} in the context of \texorpdfstring{$h = c = 1$}{h=c=1}}
\label{sec6.6}

In our model, we adopt the natural units system $h = c = 1$, where the
speed of light $c$ and the reduced Planck constant $\hbar $ are set to
1. In this system
\begin{itemize}
\item energies are measured in GeV,
\item lengths are measured in GeV$^{-1}$.
\end{itemize}
Using this unit system, we can directly transition from geometric calculations
to energy values without complicated unit conversions.

\subsection{Imposing the expected value of torsion}
\label{sec6.7}

In the context of electroweak symmetry breaking, we impose that the expected
value of torsion is 246~GeV. This value is fundamental in the theory
of spontaneous electroweak symmetry breaking in the Standard Model. It
is connected to the vacuum expectation value of the Higgs field, which
causes the symmetry breaking. Therefore, we consider that the expected
value of torsion generated by our model corresponds to this value
%
\begin{equation}
246 = \frac{\Lambda^3 \, r_4^2 \cdot 8\pi^2}{21},
\label{eq52}
\end{equation}
from which we obtain
%
\begin{equation}
 r_4^2 = \frac{246 \cdot 21}{\Lambda^3 \cdot 8\pi^2},
\label{eq53}
\end{equation}
and taking the square root
%
\begin{equation}
 r_4 \approx 8.1 \, \text{GeV}^{-1}.
\label{eq54}
\end{equation}

\subsection{Verification of the dimensions of \texorpdfstring{$r_{4}$}{r4}}
\label{sec6.8}

To convert $r_{4}$ into meters, we use the standard conversion
$1~\mbox{GeV}^{-1} \approx 1.97 \times 10^{-16}\mbox{ m}$, so
%
\begin{equation}
 r_4 \approx 8.1 \times 1.97 \times 10^{-16} \, \text{m} \approx 1.6 \times 10^{-15} \, \text{m}.
\label{eq55}
\end{equation}
This value is consistent with the scales of compact dimensions in string
theories or Kaluza-Klein theories.

\subsection{Calculation of the masses of the \texorpdfstring{$W$}{W} and \texorpdfstring{$Z$}{Z} bosons}
\label{sec6.9}

The mass of the $W$ boson can be calculated from the relation
%
\begin{equation}
m_{W}^{2} = g^{2} \frac{\langle T \rangle ^{2}}{4},
\label{eq56}
\end{equation}
where $g \approx 0.65$ is the weak coupling constant, and
$\langle T \rangle = 246~\mbox{GeV}$. Substituting the values, we obtain
%
\begin{equation}
m_{W}^{2} = (0.65)^{2} \cdot \frac{246^{2}}{4} = 0.4225 \cdot 15129
\approx 6387~\mbox{GeV}^2,
\label{eq57}
\end{equation}
from which
%
\begin{equation}
m_{W} \approx \sqrt{6387} \approx 79.93~\mbox{GeV}.
\label{eq58}
\end{equation}
This value is very close to the experimental value
$m_{W} \approx 80.38\mbox{ GeV}$.

The mass of the $Z$ boson is related to that of the $W$ boson through the
Weinberg angle $\theta _{W}$, according to the formula
%
\begin{equation}
m_{Z}^{2} = \frac{m_{W}^{2}}{\cos ^{2} \theta _{W}}\cdot
\label{eq59}
\end{equation}
 With $\cos \theta _{W} \approx 0.88$, we get
%
\begin{equation}
m_{Z}^{2} = \frac{79.93^{2}}{(0.88)^{2}} \approx 8247~\mbox{GeV}^2\,,
\label{eq60}
\end{equation}
from which
%
\begin{equation}
m_{Z} \approx \sqrt{8247} \approx 90.8~\mbox{GeV}.
\label{eq61}
\end{equation}
This value is also very close to the experimental value
$m_{Z} \approx 91.19\mbox{ GeV}$.

The calculated masses of the $W$ and $Z$ bosons are consistent with the
experimental values. This development provides a clear and rigorous view
of the role of torsion in spontaneous electroweak symmetry breaking in
our geometric model.

\section{Contribution of torsion to local and global curvature on \texorpdfstring{$S^{4}$}{S4}}
\label{sec:contribution}

In this section, we analyze in detail the contribution of the torsion introduced
by the form $\alpha $, explained in Section~\ref{sec:example}, to the local
curvature of $S^{4}$. In particular, we will show that torsion modifies
the local curvature but does not alter the total curvature and the global
volume of the manifold. Finally, we will demonstrate that the radius
$r_{4}$, calculated in Section~\ref{sec:ssb}, remains valid even in the presence of
torsion, and that the expectation value of torsion
$\langle T \rangle = 246~\mbox{GeV}$ is consistent with this value of
$r_{4}$.

From Section~\ref{sec:example}, the form $\omega $ is defined by Eq.~(\ref{eq23}).
The associated torsion $d\omega $ is given by the exterior derivative of
$\omega $, see Eq.~(\ref{eq25}), which represents a non-zero torsion. The associated
torsion tensor is therefore
%
\begin{equation}
\label{eq:torsion}
T^{\lambda}_{\mu \nu} = \frac{1}{6} (d\omega )^{\lambda}_{\mu \nu
\rho} V^{\rho},
\end{equation}
where to ensure geometric consistency of the torsion, we consider
$V^{\rho}$ as a Killing vector of the metric on $S^{4}$. A Killing vector
satisfies the condition
%
\begin{equation}
\nabla _{(\mu} V_{\nu )} = 0,
\label{eq63}
\end{equation}
which ensures that the vector field generates a symmetry of the metric
without altering it.

Since $S^{4}$ has isometries group $SO(5)$, a natural set of Killing vectors
is associated with the generators of the internal rotations. In this context,
we choose $V^{\rho}$ as the Killing vector corresponding to the rotation
around the angular coordinate $\delta $
%
\begin{equation}
V^{\rho }= \frac{\partial}{\partial \delta}.
\label{eq64}
\end{equation}

This choice is motivated by the fact that the torsion is confined to
$S^{4}$, so it must be selected along a direction that respects the symmetries
of the sphere. By using a Killing vector, we ensure that the torsion does
not artificially break these symmetries and that it is compatible with
the geometric evolution of the structure $G_{2}$.

Alternatively, one could consider a linear combination of Killing vectors
associated with different angular directions. However, the choice of
$V^{\rho }= \frac{\partial}{\partial \delta}$ is the most natural for simplicity
and symmetry. Therefore, with this choice, the expression of the torsion
tensor (\ref{eq:torsion}) is well defined and consistent with the geometry
of $S^{4}$.

This torsion tensor modifies the affine connection compared to the Levi-Civita
connection. The new connection is
%
\begin{equation}
\Gamma ^{\lambda}_{\mu \nu} = \bar{\Gamma}^{\lambda}_{\mu \nu} +
\frac{1}{2} T^{\lambda}_{\mu \nu},
\label{eq65}
\end{equation}
where $\bar{\Gamma}^{\lambda}_{\mu \nu}$ is the Christoffel symbol of the
torsion-free Levi-Civita connection.

The Riemann curvature tensor, which describes local curvature, is given by  
%
\begin{equation}  
R^{\lambda}_{\mu \nu \sigma} = \partial _{\nu} \Gamma ^{\lambda}_{\mu \sigma} - \partial _{\sigma} \Gamma ^{\lambda}_{\mu \nu} + \Gamma ^{\lambda}_{\alpha \nu} \Gamma ^{\alpha}_{\mu \sigma} - \Gamma ^{\lambda}_{\alpha \sigma} \Gamma ^{\alpha}_{\mu \nu}.  
\label{eq66}  
\end{equation}  
Substituting the connection \(\Gamma^{\lambda}_{\mu \nu}\) modified by torsion, we obtain  
%
\begin{align}  
R^{\lambda}_{\mu \nu \sigma} = \overline{R}^{\lambda}_{\mu \nu \sigma} + \frac{1}{2} \left( \partial_\nu T^{\lambda}_{\mu \sigma} - \partial_\sigma T^{\lambda}_{\mu \nu} \right) + \frac{1}{4} \left( T^{\lambda}_{\alpha \nu} T^{\alpha}_{\mu \sigma} - T^{\lambda}_{\alpha \sigma} T^{\alpha}_{\mu \nu} \right),  
\label{eq67}  
\end{align}  
where \(\overline{R}^{\lambda}_{\mu \nu \sigma}\) is the Riemann tensor associated with the torsion-free Levi-Civita connection.  

Given the torsion tensor \( T^{\lambda}_{\mu \nu} = 2 \cos(2\beta) \cos\gamma \cdot \delta^{\lambda}_{\delta} \cdot \epsilon_{\mu \nu \gamma \beta} \), its derivatives are:  
%
\begin{equation}  
\partial_\nu T^{\lambda}_{\mu \sigma} = -4 \sin(2\beta)\cos\gamma \cdot \delta^\beta_\nu - 2 \cos(2\beta)\sin\gamma \cdot \delta^\gamma_\nu.  
\label{eq68}  
\end{equation}  
The terms involving torsion derivatives vanish when averaged over \( S^4 \), due to the oscillating \(\cos\gamma\) and \(\sin\gamma\) factors.  

The quadratic terms are:  
%
\begin{equation}  
T^{\lambda}_{\alpha \nu} T^{\alpha}_{\mu \sigma} = 4 \cos^2(2\beta)\cos^2\gamma \cdot \delta^{\lambda}_{\delta} \cdot \epsilon_{\mu \sigma \alpha \beta} \cdot \epsilon_{\mu \nu \alpha \beta}.  
\label{eq69}  
\end{equation}  

Taking the mean over \( S^4 \):  
\[
\left\langle \cos^2\gamma \right\rangle = \frac{1}{3}, \quad \left\langle \sin\gamma \right\rangle = 0.
\]  
The resulting curvature tensor becomes:  
%
\begin{align}  
R^{\lambda}_{\mu \nu \sigma} = \overline{R}^{\lambda}_{\mu \nu \sigma} + \frac{1}{3} \cos^2(2\beta) \cdot \left( \text{constant} \cdot \|T\|^2 \right) \left( \delta^\beta_\nu - \delta^\beta_\sigma \right).  
\label{eq70}  
\end{align}  

The geometric contributions from torsion are encapsulated in the quadratic term \( \cos^2(2\beta) \cdot \|T\|^2 \), which modifies the local curvature of \( S^4 \) while preserving its global volume and total curvature.

\section{Torsion as a mechanism for electroweak symmetry breaking}
\label{sec:torsion}

In the Standard Model of particle physics, the spontaneous breaking of
electroweak symmetry is induced by the Higgs field, a scalar field that
acquires a vacuum expectation value (VEV) of
$\langle H \rangle = 246~\mbox{GeV}$. This process breaks the electroweak
symmetry $SU(2)_{L} \times U(1)_{Y}$ and generates the masses of the
$W^{\pm}$ and $Z$ bosons.

In this section, we explore how geometric torsion, confined to the manifold
$S^{4}$, can act as an intrinsic mechanism that breaks the electroweak
symmetry without the need to introduce an external scalar field like the
Higgs field. Torsion, being an intrinsic property of the manifold's geometry,
can acquire a vacuum expectation value of 246~GeV, thus generating
the masses of the electroweak bosons and providing a geometric explanation
for symmetry breaking.

Torsion is a generalization of the affine connection in differential geometry,
allowing for a connection that is non-symmetric in its lower indices. In
the presence of torsion, the affine connection
$\Gamma ^{\lambda}_{\mu \nu}$ is corrected by a torsion tensor
$T^{\lambda}_{\mu \nu}$, such that
%
\begin{equation}
T^{\lambda}_{\mu \nu} = \Gamma ^{\lambda}_{\mu \nu} - \Gamma ^{
\lambda}_{\nu \mu}
\label{eq71}
\end{equation}
In a manifold like $S^{4}$, which is compact and closed, torsion can be
geometrically confined and act as a structural perturbation that alters
the symmetries of the manifold.

Consider the manifold $S^{4}$ with an internal geometric torsion defined
by the 3-form $\varphi $ and its dual $\ast \varphi $. The torsion is expressed
in terms of its classes $\tau _{0}$, $\tau _{1}$, and $\tau _{3}$, as described
in Section~\ref{sec:ssb}. Each class acts on specific terms of the differential
form, and collectively they reduce the global symmetry of the manifold.
The equations governing the torsion on $S^{4}$ generate an expectation
value that can be interpreted as an average torsion over the whole manifold,
expressed as
%
\begin{equation}
\langle T \rangle = \frac{1}{\text{Vol}(S^4)} \int_{S^4} T(\phi) \, dV_g
\label{eq72}
\end{equation}
The resulting expectation value is
$\langle T \rangle = 246~\mbox{GeV}$, which is exactly the same value used
in the Standard Model to break the electroweak symmetry.

In the Standard Model, the Higgs field breaks the
$SU(2)_{L} \times U(1)_{Y}$ symmetry when it acquires a non-zero vacuum
expectation value, leading to the generation of masses for the
$W^{\pm}$ and $Z$ bosons. However, in this model based on geometric torsion,
it is the torsion that breaks the electroweak symmetry.

Through the torsion classes $\tau _{0}$, $\tau _{1}$, and
$\tau _{3}$, the symmetry of the manifold is reduced from
$(SU(2)_{L} \times U(1)_{Y}$ to $U(1)_{\text{EM}}$. This process is entirely
analogous to the symmetry breaking caused by the Higgs field, but in this
case, it is the internal geometry that plays the role of the Higgs field,
spontaneously breaking the electroweak symmetry.

When the electroweak symmetry is broken by torsion, the masses of the
$W^{\pm}$ and $Z$ bosons emerge directly from the torsion expectation value
$\langle T \rangle = 246~\mbox{GeV}$, just as in the Higgs mechanism. The
masses are given by
%
\begin{equation}
m_{W}^{2} = \frac{g^{2}}{4} \langle T \rangle ^{2}, \quad m_{Z}^{2} =
\frac{g^{2} + g^{\prime \,2}}{4} \langle T \rangle ^{2},
\label{eq73}
\end{equation}
where $g$ and $g'$ are the coupling constants of the electroweak group
$SU(2)_{L} \times U(1)_{Y}$. This is the same mechanism by which the Higgs
field generates the boson masses, but here it is the torsion that fulfills
this function.

The key point of this model is that the electroweak symmetry breaking is
not caused by an external matter field, but by the geometry of the manifold
itself. Torsion acts as an internal geometric field, which spontaneously
breaks the electroweak symmetry due to the geometric properties of
$S^{4}$. In this scenario, the torsion expectation value
$\langle T \rangle $ replaces the Higgs field's expectation value, and
it is no longer necessary to introduce an external Higgs field to explain
electroweak symmetry breaking. Torsion becomes an integral part of the
compact universe's geometric structure, intrinsically breaking the electroweak
symmetry.

This model is self-defined, as torsion is an intrinsic property of the
geometric manifold and does not depend on external scalar fields or additional
perturbations. The symmetry breaking occurs at the level of the internal
geometric structure, thus requiring no additional mechanisms to explain
the torsion expectation value.

In other words, the value $\langle T \rangle = 246~\mbox{GeV}$ directly
emerges from the geometry of the manifold, making the system autonomous
in its ability to break the symmetry.

The geometric model based on torsion offers several advantages over the
Standard Model:
\begin{itemize}
\item \textit{No external scalar field}: There is no need to introduce the
Higgs field as a separate entity. Geometric torsion fulfills the role of
the Higgs field, simplifying the model and providing a more natural geometric
explanation for symmetry breaking.
\item \textit{Geometric origin of masses}: The masses of the electroweak
bosons arise directly from the geometry of the $S^{4}$ manifold, providing
a connection between particle physics and spacetime geometry.
\item \textit{Consistency with extra-dimensional theories}: The geometric
model naturally connects with extra-dimensional theories, such as Kaluza-Klein
theories and string theory, where compact dimensions play a crucial role
in mass generation and gauge interactions.
\end{itemize}

Although this model is theoretical, it offers predictions that can be tested
experimentally. For example, deviations in the values of the
$W^{\pm}$ and $Z$ boson masses from the Standard Model predictions could
provide an indication of the presence of geometric torsion. Furthermore,
extra-dimensional theories could be explored through searches for particles
associated with these compact dimensions in accelerators like the LHC.

In summary, the geometric torsion confined to the $S^{4}$ manifold can
act as an intrinsic mechanism for electroweak symmetry breaking. In this
model, torsion takes on the role played by the Higgs field in the Standard
Model, providing a self-defined geometric explanation for symmetry breaking
and the generation of electroweak boson masses. The torsion expectation
value $\langle T \rangle = 246~\mbox{GeV}$ thus becomes a fundamental parameter,
directly tied to the geometric structure of the universe. This model presents
an interesting alternative to the Higgs mechanism, naturally unifying particle
physics and spacetime geometry.

\section{Einstein equations in the 11D model with geometric fields on \texorpdfstring{$S^{3}$}{S3} and \texorpdfstring{$S^{4}$}{S4}}
\label{sec9}

The G$_{2}$-manifold, with torsion confined to $S^{4}$, can generate physical
fields arising from its internal symmetries without the need for Standard
Model bosons. This occurs because the geometric torsion modifies the gauge
connections that govern the system's properties.

In the context of the G$_{2}$-manifold, fields can naturally emerge from
its internal symmetries. The torsion, represented by terms associated with
differential forms (such as the G$_{2}$ 3-form $\varphi $), contributes
to symmetry breaking and the emergence of new fields. Electroweak symmetry
is not involved in this context. By focusing solely on the G$_{2}$-manifold
and its internal symmetries, we can generate physical fields derived from
the manifold's structure. The goal is to keep the focus on fields associated
with the geometric torsion confined to $S^{4}$, avoiding direct references
to the $W$ and $Z$ bosons, which are tied to electroweak symmetry breaking.

In the 11-dimensional model under consideration, the balance between the
curvature of the extra dimensions $S^{3} \times S^{4}$ and the effective
cosmological constant $\Lambda _{eff}$, fixed at the experimental value
of $\sim 10^{-122}\mbox{ GeV}^2$, is ensured by the introduction of geometric
fields. These fields, $V(\xi )$ on $S^{3}$ and $U(\sigma )$ on
$S^{4}$, are closely related to the internal symmetries of the
$G_{2}$ manifold and can naturally emerge from the confined torsion.

The fields $V(\xi )$ and $U(\sigma )$ are governed by the Klein-Gordon
equation, which describes the evolution of scalar fields in the compact
dimensions. The equation is
%
\begin{equation}
\Box \varphi - \frac{dV(\varphi )}{d\varphi} = 0\,,
\label{eq74}
\end{equation}
where $\varphi $ represents $V(\xi )$ on $S^{3}$ or $U(\sigma )$ on
$S^{4}$, and $V(\varphi )$ is derived from the geometric constraints of the $G_2$-manifold and the torsion expectation value (\ref{eq52}). These fields emerge from the internal symmetries of the system and
are correlated with the geometric torsion.

Assuming that the radii of the compact dimensions are
$r_{3} \sim r_{4} \sim 8.1~\mbox{GeV}^{-1}$, as calculated previously,
the scalar curvatures on $S^{3}$ and $S^{4}$ are respectively
%
\begin{equation}
R_{S^{3}} = \frac{6}{r_{3}^{2}} \quad \text{and} \quad R_{S^{4}} =
\frac{12}{r_{4}^{2}}\,.
\label{eq75}
\end{equation}
The Einstein equations for $S^{3}$ and $S^{4}$ are balanced by the presence
of the geometric fields $V(\xi )$ and $U(\sigma )$, which emerge from the
internal symmetries of the system. It is worth noting that $S^{4}$ has
torsion, so the Einstein equations are no longer the classical ones based
solely on the Levi-Civita connection. The affine connection
$\Gamma ^{\lambda}_{\mu \nu}$ is corrected by including the torsion tensor
$T^{\lambda}_{\mu \nu}$
%
\begin{equation}
\Gamma^\lambda_{\mu\nu} = \overline{\Gamma}^\lambda_{\mu\nu} + \frac{1}{2} T^\lambda_{\mu\nu}.
\label{eq76}
\end{equation}
This leads to a modification of the Einstein equations.
We now consider the 4 spatial dimensions of our universe, where the effective
cosmological constant $\Lambda _{eff}$ has the positive value discussed
earlier, suggesting an accelerated expansion of the universe, compatible
with a de Sitter spacetime. The 4D Einstein equations, with the energy-momentum
tensor $T_{\mu \nu} = 0$ (vacuum) and a positive cosmological constant,
are given by
%
\begin{equation}
R_{\mu \nu} - \frac{1}{2} R g_{\mu \nu} + \Lambda _{eff} g_{\mu \nu} =
0\,.
\label{eq77}
\end{equation}
In the case of a de Sitter universe, spacetime has constant curvature,
and Einstein's equation reduces to
%
\begin{equation}
R_{\mu \nu} = 3 H^{2} g_{\mu \nu},
\label{eq78}
\end{equation}
where $H$ is the Hubble parameter, which describes the rate of expansion
of the universe. The Ricci tensor is directly proportional to the metric
$g_{\mu \nu}$, and the scalar curvature $R$ is constant
%
\begin{equation}
R = 12 H^{2}\,.
\label{eq79}
\end{equation}
The effective cosmological constant \(\Lambda_{\text{eff}}\) arises from the balance between the intrinsic torsion of the \( G_2 \) manifold and the energy scales of the compact dimensions. The key suppression factor is \(\left( \frac{m_T}{M_{\text{Pl}}} \right)^2\), which drastically reduces the contribution of torsion to the 4D curvature. This mechanism is similar to those used in theories with extra dimensions, where energy scales much smaller than the Planck mass (\( m_T \ll M_{\text{Pl}} \)) introduce an exponential or polynomial suppression. In our formulation, the Torstone, related to the torsion of the \( G_2 \) structure, represents a "coupling scale" between the geometry of the extra dimensions and the gravitational interaction. The dependence on \( M_{\text{Pl}}^8 \) in the denominator, combined with the mass of the Torstone, guarantees that:

%
\begin{equation}
\Lambda_{\text{eff}} = \frac{3 \langle T \rangle^2}{8\pi^2 \cdot r_4^4 \cdot M_{\text{Pl}}^8} \cdot \left( \frac{m_T}{M_{\text{Pl}}} \right)^2 = \Lambda_{\text{eff}} \propto \frac{\langle T \rangle^2 \cdot m_T^2}{M_{\text{Pl}}^{10}},
\label{eq80}
\end{equation}

obtaining a result consistent with \( \Lambda_{\text{eff}} \sim 10^{-122} \, \text{GeV}^2 \), where:

\begin{itemize}

\item \( \langle T \rangle = 246 \, \text{GeV} \) is the mean value of the torsion,

\item \( r_4 = 8.09 \, \text{GeV}^{-1} \) is the radius of the compact dimension \( S^4 \),

\item \( M_{\text{Pl}} \approx 2.4 \times 10^{18} \, \text{GeV} \) is the Planck mass,

\item \( m_T \approx 30.37 \, \text{GeV} \) is the mass of the Torstone.

\end{itemize}

\noindent Since the effective cosmological constant $\Lambda _{eff}$ is related to the Hubble parameter, we can obtain a value for $H$ through the relation:
%
\begin{equation}
H = \sqrt{\frac{\Lambda _{eff}}{3}} \approx 1.83 \times 10^{-61}\mbox{ GeV}\,.
\label{eq81}
\end{equation}
In the context of our model, the metric of 4D spacetime is that of de Sitter
%
\begin{equation}
ds^{2} = - dt^{2} + e^{2 H t} (dx^{2} + dy^{2} + dz^{2}).
\label{eq82}
\end{equation}

The exponential expansion described by the factor $e^{Ht}$ represents an
accelerating universe, consistent with the observation of a positive
$\Lambda _{eff}$. The curvature is isotropic and uniform, and the universe
is described as a 4-dimensional manifold with constant positive curvature.

In this model, the fields $V(\xi )$ and $U(\sigma )$, necessary to balance
the Einstein equations and the curvature of the extra dimensions, emerge
from the internal symmetries of the $G_{2}$ manifold and the confined torsion.
These fields are not external bosonic fields, but intrinsic geometric fields
linked to the structure of the compact spacetime
$S^{3} \times S^{4}$, providing a purely geometric explanation.

\section{Conclusions and discussion of the result and implications}
\label{sec10}

This work has introduced the G$_{2}$-Ricci flow as a new approach to studying
7-dimensional manifolds with non-zero torsion and has demonstrated its
implications for spontaneous symmetry breaking and the generation of gauge
boson masses. By analyzing the evolution of G$_{2}$-structures under this
flow, we have uncovered a novel geometric mechanism that complements the
traditional Higgs mechanism. In this context, the mass of the $W$ and
$Z$ bosons is not derived from a scalar field's vacuum expectation value
but rather from the intrinsic torsion of the G$_{2}$-structure.

The key results of this paper are summarized as follows:
\begin{enumerate}
\item \textbf{Introduction of the G2-Ricci Flow:} We have extended the classical
Ricci flow to manifolds with G2 structures, particularly those with non-zero
torsion. This new flow provides a powerful tool for analyzing the geometric
evolution of these structures, especially in contexts where torsion plays
a significant role.
\item \textbf{Solitonic Solutions and Symmetry Breaking:} The study of solitonic
solutions under the G2-Ricci flow has revealed a direct connection between
geometry and spontaneous symmetry breaking in gauge theories. These solitons,
influenced by the torsion, induce symmetry breaking that leads to the generation
of massive gauge bosons, offering a geometric alternative to the Higgs
mechanism.
\item \textbf{Gauge Boson Masses and Torsion:} We have shown that the masses
of the $W$ and $Z$ bosons, as derived from the G2-Ricci flow, are in agreement
with experimental values, further reinforcing the validity of this geometric
approach to symmetry breaking. This result highlights the potential of
torsion to play a fundamental role in high-energy physics, particularly
in models that extend beyond the Standard Model.
\item \textbf{Geometric Interpretation of the Cosmological Constant:} The
geometry of extra dimensions and the curvature of our spacetime is explored,
with implications for the experimentally observed positive cosmological
constant, suggesting that the geometry of extra dimensions could be directly
related to the curvature of our universe.
\end{enumerate}

\subsection{Significance of the results}
\label{sec10.1}

The results obtained in this paper introduce a novel theoretical framework
that connects differential geometry with high-energy physics, potentially
broadening our understanding of the fundamental forces in nature. The G$_{2}$-Ricci
flow, with its ability to incorporate torsion, opens up new possibilities
for exploring extra-dimensional theories and their implications for particle
physics. In particular, this work provides a direct link between the intrinsic
geometric properties of a manifold and the mass generation mechanism of
gauge bosons, which could offer an alternative or complementary view to
the Higgs mechanism.

\subsection{Limitations and future directions}
\label{sec10.2}

While the results presented here are promising, several limitations must
be acknowledged. The analysis is largely theoretical, and further work
will be needed to determine the physical realizability of these solutions,
particularly in connection with experimental data. Additionally, the long-term
behavior of the G$_{2}$-Ricci flow, particularly in non-compact settings,
requires further exploration to fully understand the range of possible
solitonic solutions and their stability.

Future research could explore more complex manifolds or investigate interactions
between multiple solitonic solutions. Moreover, understanding the role
of torsion in quantum field theory and cosmology could lead to deeper insights
into the geometry of the universe, potentially connecting string theory,
supergravity, and other higher-dimensional models with observable phenomena.

\subsection{Connection with experimental physics}
\label{sec10.3}

The \textit{Torstone} is a theoretical particle associated with the geometric
torsion field in spacetime, emerging in the context of the G$_{2}$ Ricci
flow with torsion, as described in the extra dimensions model
$(S^{3} \times S^{4})$. Unlike standard particles, which derive their properties
from fundamental electroweak or strong interactions, the Torstone is linked
to the torsion of spacetime: a deformation of the affine connections that
could influence gravity or other forces on cosmological and subatomic scales.

In the model under consideration, the residual torsion is confined to the
extra dimension $S^{4}$, and the expected value of the torsion is fixed
at 246~GeV, analogous to the expectation value of the Higgs field
in the Standard Model. From this residual torsion, we can estimate the
mass of the Torstone based on the formula
%
\begin{equation}
m_{T} \propto \frac{\langle T(\varphi ) \rangle}{\Lambda \cdot r_{4}}\,,
\label{eq83}
\end{equation}
where $r_{4}$ is the radius of the extra dimension. Using the expected
value of the torsion of 246~GeV and the radius $r_{4}$ equal to
8.1 ~ GeV$^{-1}$, we obtain an estimate of the Torstone mass around
30.37~GeV. This would place it in the same energy range as other massive
particles, making it potentially detectable through high-energy experiments.

There are several experimental directions through which we can seek evidence
for the existence of the Torstone, based on experiments involving particle
collisions, gravitational waves, and cosmological observations.
\begin{enumerate}
\item \textit{High-Energy Collisions:}
Large Hadron Collider (LHC): The Torstone could be produced in high-energy
collisions between protons. Signals could be observed through ano\-ma\-lous
decays of particles like the $W$ and $Z$ bosons or signatures of invisible
particles \cite{Thomas:2022} with a mass on the order of
110~GeV.

Future Accelerators (FCC): In next-generation accelerators like the Future
Circular Collider (FCC) \cite{Suarez:2022}, high-energy processes involving
the torsion field could be investigated, potentially revealing decay signatures
or production of exotic particles like the Torstone.
\item \textit{Gravitational Wave Measurements:}
LIGO/Virgo/KAGRA: The Torstone, associated with the torsion of spacetime,
could influence the signal of gravitational waves observed during the merger
of black holes or neutron stars. The observed gravitational waves
\cite{Abbott:2020} might show distortions compared to predictions from
General Relativity, signaling the presence of residual torsion in the fabric
of spacetime.

Primordial Gravitational Waves: If torsion was present in the early universe,
it may have left traces in primordial gravitational waves, which could
be detected in future observations.
\item \textit{Cosmological Observations:}
Cosmic Microwave Background (CMB): The re\-sidual torsion could influence
the polarization of the cosmic microwave background radiation
\cite{Aghanim:2020}, with measurable anomalies in the data from the Euclid
telescope or the Planck telescope.

Gravitational Lensing Effects: Torsion might manifest as gravitational
deviations in large-scale gravitational lensing observations
\cite{Abbott:2022}, offering a new window to investigate phenomena unexplained
by Ge\-neral Relativity alone.
\item \textit{Dark Matter Experiments:}
Interactions with Dark Matter: If the Torstone interacts weakly with dark
matter, experiments like XENON \cite{Aprile:2023} or LUX-ZEPLIN
\cite{Aalbers:2023} could detect anomalous signals, such as weakly interacting
massive particles.
\item \textit{Muon Magnetic Moment:}
Recent anomalies in the muon magnetic moment $a_{\mu}$
\cite{Yohei:2024} could be an indirect signal of the Torstone's presence.
The torsion field associated with the Torstone might subtly alter lepton
interactions, influencing the results of experiments like the $g-2$ muon
experiment \cite{Aguillard:2024}.
\end{enumerate}

In conclusion, the introduction of the G$_{2}$-Ricci flow and its applications
to high energy physics offer a promising new direction for both theoretical
physics and differential geometry. This geometric framework, as well as
the Pigazzini-Pin\v{c}\'{a}k brane world scenario, highlights the deep
connection between the geometry of extra dimensions and the physical properties
of our universe. The potential for future extensions and experimental validation
makes this an exciting area for ongoing research.

Finally, the mathematical framework developed in this work could be extended
to other geometric scenarios, including connections with supergravity and
M-theory. Exploring the relationship between the torsional Ricci flow and
quantum gravity could offer new insights for a unified theory of fundamental
interactions.

In a future work we will analyze the complex problem of the black hole
information loss paradox. The solution can emerge from the use of a geometric
solitonic structure, based on a G$_{2}$-manifold with torsion, analyzed
in the present paper. This configuration can preserve information through
a stationary solution, preventing the complete evaporation of the black
hole and predicting a non-zero residual mass. We will try to show how the
geometric torsion can influence both the dynamics of matter and the Hawking
radiation.



\end{document}